\newtheorem{theorem}{\it Theorem}
\newtheorem{lemma}{\it Lemma}
\newtheorem{definition}{\it Definition}
\newtheorem{corollary}{\it Corollary}
\newtheorem{proposition}{\it Proposition}
\newtheorem{scenario}{\it Scenario}
\DeclareMathOperator*{\esssup}{ess\,sup}
\title{\LARGE \bf
Information-Theoretic Performance Limitations of Feedback Control: Underlying Entropic Laws and Generic $\mathcal{L}_{p}$ Bounds
}
\author{Song Fang$^{1}$ and Quanyan Zhu$^{1}$
	\thanks{$^{1}$Song Fang and Quanyan Zhu are with the Department of Electrical and Computer Engineering, New York University, New York, USA
		{\tt\small song.fang@nyu.edu; quanyan.zhu@nyu.edu}}%
}
\begin{document}

\maketitle
\thispagestyle{empty}
\pagestyle{empty}

\begin{abstract}

In this paper, we utilize information theory to study the fundamental performance limitations of generic feedback systems, where both the controller and the plant may be any causal functions/mappings while the disturbance can be with any distributions. More specifically, we obtain fundamental $\mathcal{L}_p$ bounds on the control error, which are shown to be completely characterized by the conditional entropy of the disturbance, based upon the entropic laws that are inherent in any feedback systems. We also discuss the generality and implications (in, e.g., fundamental limits of learning-based control) of the obtained bounds.

\end{abstract}


\section{Introduction}
\label{sec:intro}


Machine learning techniques  are becoming more and more prevalent nowadays in the feedback control of dynamical systems (see, e.g., \cite{
	duriez2017machine,
	kiumarsi2017optimal, bertsekas2019reinforcement,recht2019tour,
	zoppoli2020neural} and the references therein), where system dynamics that are determined by physical laws will play an indispensable role. Representative learning-based control systems include control with deep reinforcement learning \cite{mnih2015human, duan2016benchmarking}, where, as it is name indicates, deep reinforcement learning algorithms are employed to replace conventional controllers in feedback control systems such as robotic control systems. In this trend, it is becoming more and more critical to be fully aware of the performance limits of the machine learning algorithms that are to be embedded in the feedback loop, i.e., the fundamental limits of learning-based control, especially in scenarios where performance guarantees are required and must be strictly imposed. This is equivalent to ask: What is the optimal control performance that machine learning methods can achieve in the position of the controller?

In conventional performance limitation analysis \cite{seron2012fundamental} of feedback systems such as the Bode integral \cite{Bod:45}, however, specific restrictions on the classes of the controller that can be implemented must be imposed in general; one common restriction is that the controllers are assumed to be linear time-invariant (LTI) \cite{seron2012fundamental}. These restrictions would normally render the analysis invalid if machine learning elements such as deep learning \cite{goodfellow2016deep} or reinforcement learning \cite{bertsekas2019reinforcement} are to be placed at the position of the controller, since the learning algorithms are quite complicated and not necessarily LTI from an input-output viewpoint.



Information theory \cite{Cov:06}, a mathematical theory developed originally for the analysis of fundamental limits of communication systems \cite{shannon1998mathematical}, was in recent years seen to be applicable to the analysis of performance limitations of feedback control systems as well, including Bode-type integrals \cite{zang2003nonlinear, Mar:07, Mar:08, Oka:08, Ish:09, Yu:10, Les:10,  hurtado2010limitations, Li:13a, Li:13b, heertjes2013self, ruan2013information, Zha:14, zhao2015effect,  lupu2015information, Fang17TAC, Fang17Automatica, wan2019sensitivity} (see also \cite{fang2017towards, chen2019fundamental} for surveys on this topic), power gain bounds \cite{fang2018power}, and limits of variance minimization \cite{fang2017fundamental}.
One essential difference between this line of research and the conventional feedback performance limitation analysis is that the information-theoretic performance bounds hold for any causal controllers (of which LTI controllers are only a special class), i.e., for any causal functions/mappings in the position of the controller, though oftentimes the plant is assumed to be LTI while the disturbance is assumed stationary Gaussian. This is a key observation that enables the later discussions on the implications of such results in fundamental limits of learning-based control. (It is worth mentioning that there also exist other approaches to analyze the performance limits of feedback control systems while allowing the controllers to be generic; see, e.g., \cite{xie2000much, guo2020feedback} and \cite{nakahira2019connecting,nakahira2020integrative} as well as the references therein.)

In this paper, we go beyond the classes of performance limitations analyzed in the aforementioned works, and investigate the fundamental $\mathcal{L}_p$ limitations of generic feedback systems in which both the controller and the plant may be any causal while the disturbance can be with any distributions and is not necessarily stationary. The analysis will be carried out by studying the underlying entropic relationships of the signals flowing the feedback loop, and the derived $\mathcal{L}_p$ bounds (for $p \geq 1$) are all seen to be completely characterized by the conditional entropy of the disturbance.
We also discuss the implications of the derived bounds in fundamental limits of learning-based control, noticing that any machine learning elements in the position of the controller may be viewed as causal functions/mappings from its input to output.  

The remainder of the paper is organized as follows. Section~II introduces the technical preliminaries. In Section~III, we introduce the fundamental $\mathcal{L}_{p}$ bounds for generic feedback systems, with discussions on their generality and implications.
Concluding remarks are given in Section~IV.

Note that an arXiv version of this paper with additional results and discussions can be found in \cite{FangCSL20arxiv}.



\section{Preliminaries}

In this paper, we consider real-valued continuous random variables and vectors, as well as discrete-time stochastic processes they compose. All the random variables, random vectors, and stochastic processes are assumed to be zero-mean, for simplicity and without loss of generality. We represent random variables and vectors using boldface letters. Given a stochastic process $\left\{ \mathbf{x}_{k}\right\}$, we denote the sequence $\mathbf{x}_0,\ldots,\mathbf{x}_{k}$ by the random vector $\mathbf{x}_{0,\ldots,k}=\left[\mathbf{x}_0^T~\cdots~\mathbf{x}_{k}^T\right]^T$ for simplicity. The logarithm is defined with base $2$. All functions are assumed to be measurable. 
A stochastic process $\left\{ \mathbf{x}_{k}\right\}$ is said to be asymptotically stationary if it is stationary as $k \to \infty$, and herein stationarity means strict stationarity unless otherwise specified \cite{Pap:02}. 
In addition, a process being asymptotically stationary implies that it is asymptotically mean stationary \cite{gray2011entropy}. 
Note in particular that, for simplicity and
with abuse of notations, we utilize $\mathbf{x} \in \mathbb{R}$ and $\mathbf{x} \in \mathbb{R}^m$ to
indicate that $\mathbf{x}$ is a real-valued random variable and that $\mathbf{x}$
is a real-valued $m$-dimensional random vector, respectively.

Entropy and mutual information are the most basic notions in information theory \cite{Cov:06}, which we introduce below.

\begin{definition} The differential entropy of a random vector $\mathbf{x}$ with density $p_{\mathbf{x}} \left(x\right)$ is defined as
	\begin{flalign}
	h\left( \mathbf{x} \right)
	=-\int p_{\mathbf{x}} \left(x\right) \log p_{\mathbf{x}} \left(x\right) \mathrm{d} x. \nonumber
	\end{flalign}
	The conditional differential entropy of random vector $\mathbf{x}$ given random vector $\mathbf{y}$ with joint density $p_{\mathbf{x}, \mathbf{y}} \left(x,y\right)$ and conditional density $p_{\mathbf{x} | \mathbf{y}} \left(x,y\right)$ is defined as
	\begin{flalign}
	h\left(\mathbf{x}\middle|\mathbf{y}\right)
	=-\int p_{\mathbf{x}, \mathbf{y}} \left(x,y\right)\log p_{\mathbf{x} | \mathbf{y}} \left(x,y\right) \mathrm{d}x\mathrm{d}y. \nonumber
	\end{flalign}
	The mutual information between random vectors $\mathbf{x}, \mathbf{y}$ with densities $p_{\mathbf{x}} \left(x\right)$, $p_{\mathbf{y}} \left( y \right) $ and joint density $p_{\mathbf{x}, \mathbf{y}} \left(x,y\right)$ is defined as
	\begin{flalign}
	I\left(\mathbf{x};\mathbf{y}\right)
	=\int p_{\mathbf{x}, \mathbf{y}} \left(x,y\right) \log \frac{p_{\mathbf{x}, \mathbf{y}} \left(x,y\right)}{p_{\mathbf{x}} \left(x\right) p_{\mathbf{y}} \left( y \right) }\mathrm{d}x\mathrm{d}y. \nonumber
	\end{flalign}
	The entropy rate of a stochastic process $\left\{ \mathbf{x}_{k}\right\}$ is defined as
	\begin{flalign}
	h_\infty \left(\mathbf{x}\right)=\limsup_{k\to \infty} \frac{h\left(\mathbf{x}_{0,\ldots,k}\right)}{k+1}. \nonumber
	\end{flalign}
\end{definition}

\vspace{3mm}

Properties of these notions can be found in, e.g., \cite{Cov:06}. 
In particular, the next lemma \cite{dolinar1991maximum} presents the maximum-entropy probability distributions under $\mathcal{L}_{p}$-norm constraints for random variables.

\begin{lemma} \label{maximum}
	Consider a random variable $\mathbf{x} \in \mathbb{R}$ with $\mathcal{L}_{p}$ norm $\left[ \mathbb{E} \left( \left| \mathbf{x} \right|^{p} \right) \right]^{\frac{1}{p}} = \mu,~p \geq 1$.
	Then,  
	\begin{flalign} 
	h \left( \mathbf{x} \right) 
	\leq \log \left[ 2 \Gamma \left( \frac{p+1}{p} \right) \left( p \mathrm{e} \right)^{\frac{1}{p}} \mu \right], \nonumber
	\end{flalign}
	where equality holds if and only if $\mathbf{x}$ is with probability density
	\begin{flalign}
	f_{\mathbf{x}} \left( x \right)
	= \frac{ \mathrm{e}^{- \left| x \right|^{p} / \left( p \mu^{p} \right)} }{2 \Gamma \left( \frac{p+1}{p} \right) p^{\frac{1}{p}} \mu}. \nonumber
	\end{flalign}
	Herein, $\Gamma \left( \cdot \right)$ denotes the Gamma function.
\end{lemma}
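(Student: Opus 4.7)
The plan is to use the standard Gibbs' (non-negativity of relative entropy) argument for maximum-entropy distributions. Let $g$ denote the candidate density stated in the lemma, namely $g(x) = \mathrm{e}^{-|x|^{p} / (p\mu^{p})} / Z$ with $Z = 2\Gamma((p+1)/p)\, p^{1/p}\, \mu$. First I would verify that $g$ is a bona fide probability density and that it realizes the $\mathcal{L}_{p}$ constraint $\mathbb{E}_{g} |\mathbf{x}|^{p} = \mu^{p}$. A substitution $u = |x|^{p} / (p\mu^{p})$ reduces both $\int g(x)\, dx$ and $\int |x|^{p} g(x)\, dx$ to standard Gamma integrals on the half-line (using symmetry), and the identity $\Gamma(1/p) = p\,\Gamma((p+1)/p)$ pins down the normalization $Z$ exactly as stated and yields $\mathbb{E}_{g} |\mathbf{x}|^{p} = \mu^{p}$.

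Next I would compute $h(g)$ directly. Since $-\log g(x) = \log Z + (\log \mathrm{e})\, |x|^{p} / (p\mu^{p})$, taking expectation under $g$ gives
\begin{equation*}
h(g) = \log Z + \frac{\log \mathrm{e}}{p \mu^{p}} \cdot \mu^{p} = \log\!\left[ 2\Gamma\!\left(\tfrac{p+1}{p}\right) p^{1/p}\,\mu \right] + \log \mathrm{e}^{1/p},
\end{equation*}
which coincides with the right-hand side of the claimed bound after combining the two logarithms into $\log\!\left[2\Gamma((p+1)/p)(p\mathrm{e})^{1/p}\mu\right]$.

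For the inequality itself, let $f$ be any probability density on $\mathbb{R}$ satisfying $\int |x|^{p} f(x)\, dx = \mu^{p}$. The non-negativity of the Kullback--Leibler divergence yields
\begin{equation*}
0 \leq D(f \,\|\, g) = -h(f) - \int f(x) \log g(x)\, dx.
\end{equation*}
Crucially, because $-\log g(x) = \log Z + (\log \mathrm{e})\, |x|^{p} / (p\mu^{p})$ depends on $x$ only through $|x|^{p}$, the integral $\int f(x)[-\log g(x)]\, dx$ depends on $f$ only through its $p$-th absolute moment, which is fixed at $\mu^{p}$ by hypothesis; it therefore evaluates to the same quantity computed above for $h(g)$. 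Consequently $h(f) \leq h(g)$, with equality iff $D(f \| g) = 0$, i.e., iff $f = g$ almost everywhere, giving both the bound and the stated equality condition.

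The main obstacle is purely bookkeeping: juggling the $\log \mathrm{e}$ factor that appears because the differential entropy is defined with $\log_{2}$ while the density contains a natural exponential, and keeping the Gamma-function manipulations consistent across the normalization constant and the moment computations. No conceptual difficulty beyond the classical maximum-entropy template is expected, and uniqueness of the extremizer follows automatically from the strict convexity of relative entropy.
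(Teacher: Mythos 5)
Your proof is correct. The paper does not actually prove Lemma~1 --- it is imported verbatim from the cited reference (Dolinar, 1991) --- so there is no in-paper argument to compare against; your derivation is the standard and complete one: the Gamma-integral bookkeeping (normalization $Z = 2\Gamma(\tfrac{p+1}{p})p^{1/p}\mu$ via $\Gamma(1/p)/p = \Gamma(\tfrac{p+1}{p})$, the moment identity $\mathbb{E}_{g}|\mathbf{x}|^{p} = \mu^{p}$, and $h(g) = \log Z + \tfrac{1}{p}\log\mathrm{e}$) checks out, and the Gibbs step correctly exploits that $-\log g$ depends on $x$ only through the constrained moment, so $-\int f\log g = h(g)$ for every admissible $f$ and $h(f)\leq h(g)$ with equality iff $f=g$ a.e.
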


In particular, when $p \to \infty$, 
\begin{flalign}
\lim_{p \to \infty} \left[ \mathbb{E} \left( \left| \mathbf{x} \right|^{p} \right) \right]^{\frac{1}{p}} = \esssup_{ f_{\mathbf{x}} \left( x \right) > 0} \left| \mathbf{x} \right|, \nonumber
\end{flalign}
and
\begin{flalign} 
\lim_{p \to \infty} \log \left[ 2 \Gamma \left( \frac{p+1}{p} \right) \left( p \mathrm{e} \right)^{\frac{1}{p}} \mu \right] = \log \left( 2 \mu \right), \nonumber
\end{flalign}
while
\begin{flalign}
\lim_{p \to \infty}\frac{ \mathrm{e}^{- \left| x \right|^{p} / \left( p \mu^{p} \right)} }{2 \Gamma \left( \frac{p+1}{p} \right) p^{\frac{1}{p}} \mu}
= 
\left\{ \begin{array}{cc}
\frac{1}{2 \mu}, & \left| x \right| \leq \mu,\\
0, & \left| x \right| > \mu.
\end{array} \right. \nonumber
\end{flalign}

\section{Fundamental $\mathcal{L}_{p}$ Bounds of Feedback Systems}

\begin{figure}
	\begin{center}
		\vspace{-6mm}
		\includegraphics [width=0.4\textwidth]{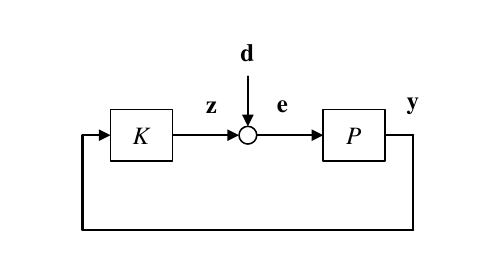}
		\vspace{-6mm}
		\caption{A feedback control system.}
		\label{feedback}
	\end{center}
	\vspace{-3mm}
\end{figure}

In this section, we present fundamental $\mathcal{L}_{p}$ bounds for generic feedback systems. We start with two scenarios.

\begin{scenario}
	Consider first the feedback control system depicted in Fig.~\ref{feedback}. Herein, $\mathbf{d}_{k}, \mathbf{e}_{k}, \mathbf{z}_{k}, \mathbf{y}_{k} \in \mathbb{R}$. Assume that the plant $P$ is strictly causal, i.e., 
\begin{flalign} 
\mathbf{y}_{k}= {P}_{k} \left( \mathbf{e}_{0,\ldots,k-1}\right), \nonumber 
\end{flalign}
for any time instant $k \geq 0$. Moreover, the controller $ {K}$ is assumed to be causal, i.e., for any time instant $k \geq 0$,
\begin{flalign}
\mathbf{z}_{k}= {K}_{k} \left( \mathbf{y}_{0,\ldots,k}\right). \nonumber
\end{flalign}
In fact, both ${K}_{k} \left( \cdot \right)$ and ${P}_{k} \left( \cdot \right)$ can be deterministic functions/mappings as well as randomized functions/mappings.
Furthermore, the disturbance $\left\{ \mathbf{d}_{k} \right\}$ and the initial state $\mathbf{z}_0$ are assumed to be independent. Note that if
the plant is not strictly causal, then the controller
should be assumed strictly causal so as to ensure the strict
causality of the open-loop system, thus preventing $\left\{ \mathbf{d}_{k} \right\}$ and $\mathbf{z}_0$ from being dependent while also avoiding any other causality issues that might arise in feedback systems.
\end{scenario}

\begin{figure}
	\begin{center}
		\vspace{-6mm}
		\includegraphics [width=0.4\textwidth]{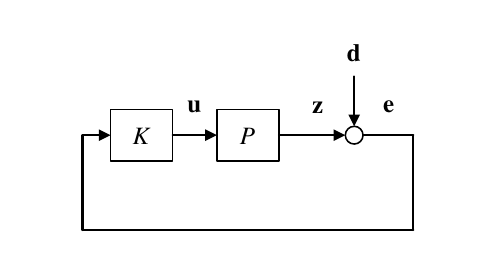}
		\vspace{-6mm}
		\caption{Another feedback control system.}
		\label{feedback2}
	\end{center}
	\vspace{-6mm}
\end{figure}

\begin{scenario}
We may also consider another system setting as depicted in Fig.~\ref{feedback2}. Herein, $\mathbf{d}_{k}, \mathbf{e}_{k}, \mathbf{z}_{k}, \mathbf{u}_{k} \in \mathbb{R}$. Assume that the plant $P$ is strictly causal, i.e., 
\begin{flalign} 
\mathbf{z}_{k}= {P}_{k} \left( \mathbf{u}_{0,\ldots,k-1}\right). \nonumber 
\end{flalign}
Moreover, the controller $ {K}$ is assumed to be causal, i.e.,
\begin{flalign}
\mathbf{u}_{k}= {K}_{k} \left( \mathbf{e}_{0,\ldots,k}\right). \nonumber
\end{flalign}
Furthermore, $\left\{ \mathbf{d}_{k} \right\}$ and $\mathbf{z}_0$ are assumed to be independent. Again, if
the plant is not strictly causal, then the controller
should be assumed strictly causal to ensure the strict
causality of the open-loop system. 
\end{scenario}

\begin{figure}
	\begin{center}
		\vspace{-6mm}
		\includegraphics [width=0.3\textwidth]{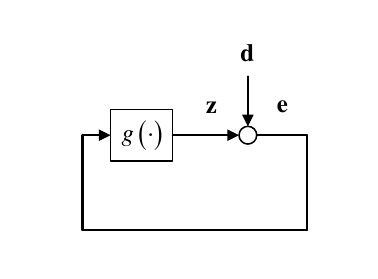}
		\vspace{-6mm}
		\caption{A generic feedback system.}
		\label{feedback3}
	\end{center}
	\vspace{-6mm}
\end{figure}

In fact, both systems of Scenario~1 (Fig.~\ref{feedback}) and Scenario~2 (Fig.~\ref{feedback2}) can be viewed as special cases of the unifying framework depicted in Fig.~\ref{feedback3}, as a generic feedback system. More specifically, herein $\mathbf{d}_{k}, \mathbf{e}_{k}, \mathbf{z}_{k} \in \mathbb{R}$, and ${g} \left( \cdot \right)$ is assumed to be strictly causal, i.e., 
\begin{flalign}  \label{generic}
\mathbf{z}_{k}= {g}_{k} \left( \mathbf{e}_{0,\ldots,k-1}\right), 
\end{flalign}
for any time instant $k \geq 0$. 
Furthermore, $\left\{ \mathbf{d}_{k} \right\}$ and $\mathbf{z}_0$ are assumed to be independent. As such, the system in Fig.~\ref{feedback} may be viewed the a special case of that in Fig.~\ref{feedback3} for the case of $g \left( \cdot \right) = K \left( P \left( \cdot \right) \right)$, while that in Fig.~\ref{feedback2} is a special case for when $g \left( \cdot \right) = P \left( K \left( \cdot \right) \right)$. Accordingly, to investigate the disturbance attenuation properties (essentially the relationship between the error signal $\left\{ \mathbf{e}_{k} \right\}$ and the disturbance $\left\{ \mathbf{d}_{k} \right\}$) of the systems in Fig.~\ref{feedback} or Fig.~\ref{feedback2}, it suffices to examine the generic feedback system given in Fig.~\ref{feedback3}.

As the main result of this paper, we now show that the following $\mathcal{L}_{p}$ bound on the error signal $\left\{ \mathbf{e}_{k} \right\}$ always holds for generic feedback systems.

\begin{theorem} \label{recursivetheorem}
	Consider the system given in Fig.~\ref{feedback3}.
	Then,  
	\begin{flalign} \label{feedbackbound}
	\left[ \mathbb{E} \left( \left| \mathbf{e}_{k} \right|^{p} \right) \right]^{\frac{1}{p}}
	\geq \frac{2^{h \left( \mathbf{d}_k | \mathbf{d}_{0,\ldots,k-1} \right)}}{2 \Gamma \left( \frac{p+1}{p} \right) \left( p \mathrm{e} \right)^{\frac{1}{p}}},
	\end{flalign}
	where equality holds if and only if $\mathbf{e}_{k}$ is with probability density 
	\begin{flalign} \label{distribution}
	f_{\mathbf{e}_{k}} \left( x \right)
	= \frac{ \mathrm{e}^{- \left| x \right|^{p} / \left( p \mu^{p} \right)} }{2 \Gamma \left( \frac{p+1}{p} \right) p^{\frac{1}{p}} \mu},
	\end{flalign}
	and $I \left( \mathbf{e}_{k}; \mathbf{d}_{0,\ldots,k-1}, \mathbf{z}_{0} \right) = 0$.
\end{theorem}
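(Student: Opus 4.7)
The plan is to reduce the statement to a single entropy comparison and then apply Proposition~1. Indeed, applying Proposition~1 to $\mathbf{e}_{k}$ gives directly
$\left[\mathbb{E}(|\mathbf{e}_{k}|^{p})\right]^{1/p} \geq 2^{h(\mathbf{e}_{k})}/\left[ 2 \Gamma((p+1)/p)(p\mathrm{e})^{1/p}\right]$,
with equality precisely when $\mathbf{e}_{k}$ has the density \eqref{distribution}. Hence it suffices to prove $h(\mathbf{e}_{k}) \geq h(\mathbf{d}_{k}\mid \mathbf{d}_{0,\ldots,k-1})$ and then track when each intermediate step is tight.

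For the entropy inequality I would chain three standard moves. First, conditioning reduces differential entropy, so $h(\mathbf{e}_{k}) \geq h(\mathbf{e}_{k}\mid \mathbf{d}_{0,\ldots,k-1}, \mathbf{z}_{0})$. Second, I claim that $\mathbf{z}_{k}$ is a deterministic function of $(\mathbf{d}_{0,\ldots,k-1}, \mathbf{z}_{0})$: unrolling the summing-junction identity $\mathbf{e}_{j} = \mathbf{d}_{j} - \mathbf{z}_{j}$ together with the strictly causal map $\mathbf{z}_{j} = g_{j}(\mathbf{e}_{0,\ldots,j-1})$ in \eqref{generic} gives, by induction on $j$, that each $\mathbf{z}_{j}$ and hence each $\mathbf{e}_{j-1}$ is measurable with respect to $\mathbf{d}_{0,\ldots,j-1}$ and $\mathbf{z}_{0}$. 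Therefore, conditional on $(\mathbf{d}_{0,\ldots,k-1}, \mathbf{z}_{0})$ the quantity $\mathbf{z}_{k}$ is constant, and translation invariance of differential entropy yields $h(\mathbf{e}_{k}\mid \mathbf{d}_{0,\ldots,k-1}, \mathbf{z}_{0}) = h(\mathbf{d}_{k}\mid \mathbf{d}_{0,\ldots,k-1}, \mathbf{z}_{0})$. Third, since $\{\mathbf{d}_{k}\}$ is independent of $\mathbf{z}_{0}$, conditioning on $\mathbf{z}_{0}$ is superfluous, so $h(\mathbf{d}_{k}\mid \mathbf{d}_{0,\ldots,k-1}, \mathbf{z}_{0}) = h(\mathbf{d}_{k}\mid \mathbf{d}_{0,\ldots,k-1})$. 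Chaining these identities and the one inequality delivers the desired bound.

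For the equality condition, equality throughout the chain requires exactly two things: (i)~equality in Proposition~1, which forces $\mathbf{e}_{k}$ to have the density \eqref{distribution}; and (ii)~equality in the conditioning step $h(\mathbf{e}_{k}) = h(\mathbf{e}_{k}\mid \mathbf{d}_{0,\ldots,k-1}, \mathbf{z}_{0})$, which is exactly the independence statement $I(\mathbf{e}_{k}; \mathbf{d}_{0,\ldots,k-1}, \mathbf{z}_{0}) = 0$. The translation-invariance and independence-of-$\mathbf{z}_{0}$ steps are automatic equalities, so these two conditions together are both necessary and sufficient, matching the theorem exactly.

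The one step that actually carries the feedback structure, and that I would write out most carefully, is the inductive verification that $\mathbf{z}_{k}$ is a deterministic function of $(\mathbf{d}_{0,\ldots,k-1}, \mathbf{z}_{0})$: the strict causality hypothesis on $g$ is precisely what breaks the algebraic loop and lets the induction close (and motivates the footnote in the paper about requiring strict causality on one side). Everything else is bookkeeping on top of Proposition~1 and elementary properties of differential entropy, so I expect this causality argument to be the only spot where a reader might want additional detail.
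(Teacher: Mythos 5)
Your proposal is correct and follows essentially the same route as the paper's own proof: the maximum-entropy bound of Lemma~1/Proposition~1 applied to $\mathbf{e}_{k}$, the inductive argument that $\mathbf{z}_{k}$ is a deterministic function of $\left(\mathbf{d}_{0,\ldots,k-1}, \mathbf{z}_{0}\right)$ via strict causality, translation invariance of conditional differential entropy, and the independence of $\left\{\mathbf{d}_{k}\right\}$ and $\mathbf{z}_{0}$ to drop the conditioning, with the same two equality conditions. The only cosmetic difference is that you phrase the key step as ``conditioning reduces entropy'' while the paper writes the equivalent additive decomposition $h\left(\mathbf{e}_{k}\right) = h\left(\mathbf{e}_{k}\middle|\mathbf{d}_{0,\ldots,k-1},\mathbf{z}_{0}\right) + I\left(\mathbf{e}_{k};\mathbf{d}_{0,\ldots,k-1},\mathbf{z}_{0}\right)$.
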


\begin{proof}
To begin with, it follows from Lemma~\ref{maximum} that
	\begin{flalign} 
	\left[ \mathbb{E} \left( \left| \mathbf{e}_{k} \right|^{p} \right) \right]^{\frac{1}{p}}
	\geq \frac{2^{h \left(  \mathbf{e}_{k} \right)}}{2 \Gamma \left( \frac{p+1}{p} \right) \left( p \mathrm{e} \right)^{\frac{1}{p}}}, \nonumber
	\end{flalign}
	where equality holds if and only if $\mathbf{e}_{k}$ is with probability density
	\begin{flalign}
	f_{\mathbf{e}_{k}} \left( x \right)
	= \frac{ \mathrm{e}^{- \left| x \right|^{p} / \left( p \mu^{p} \right)} }{2 \Gamma \left( \frac{p+1}{p} \right) p^{\frac{1}{p}} \mu}. \nonumber
	\end{flalign}	
	Herein, $\mu$ is a normalizing factor; in fact, when equality is achieved herein, it holds that
\begin{flalign}
\mu 
= \frac{2^{h \left( \mathbf{e}_{k} \right)}}{2 \Gamma \left( \frac{p+1}{p} \right) \left( p \mathrm{e} \right)^{\frac{1}{p}}}. \nonumber
\end{flalign}

On the other hand, we will prove the fact that $\mathbf{z}_{k}$ is eventually a function of $\mathbf{d}_{0,\ldots,k-1}$ and $\mathbf{z}_{0}$.
More specifically, it is clear that when $k=0$, \eqref{generic} reduces to  
$$
\mathbf{z}_{1} = g_{0} \left( \mathbf{e}_{0} \right) = g_{0} \left( \mathbf{d}_{0} + \mathbf{z}_{0} \right),
$$
that is, $\mathbf{z}_{1}$ is a function of $\mathbf{d}_{0}$ and $\mathbf{z}_{0}$.
Next, when $k=1$, 
\eqref{generic} is given by 
$$
\mathbf{z}_{2} = g_{1} \left( \mathbf{e}_{0}, \mathbf{e}_{1} \right) = g_{1} \left( \mathbf{d}_{0} +  \mathbf{z}_{0}, \mathbf{d}_{1} +  \mathbf{z}_{1} \right). 
$$
As such, since $\mathbf{z}_{1}$ is a function of $\mathbf{d}_{0}$ and $\mathbf{z}_{0}$, we have 
$$
\mathbf{z}_{2} = g_{1} \left( \mathbf{d}_{0} +  \mathbf{z}_{0}, \mathbf{d}_{1} +  g_{0} \left( \mathbf{d}_{0} + \mathbf{z}_{0} \right) \right). 
$$
In other words, $\mathbf{z}_{2}$ is a function of $\mathbf{d}_{0,1}$ and $\mathbf{z}_{0}$. We may then repeat this process and show that for any $k \geq 0$, $\mathbf{z}_{k}$ is eventually a function of $\mathbf{d}_{0,\ldots,k-1}$ and $\mathbf{z}_{0}$.

We will then proceed to prove the main result of this theorem. Note first that
\begin{flalign} 
h \left( \mathbf{e}_{k} \right)
& = h \left( \mathbf{e}_{k} |  \mathbf{d}_{0,\ldots,k-1}, \mathbf{z}_{0} \right) + I \left( \mathbf{e}_{k}; \mathbf{d}_{0,\ldots,k-1}, \mathbf{z}_{0} \right) \nonumber \\
& = h \left( \mathbf{z}_{k} + \mathbf{d}_{k} |  \mathbf{d}_{0,\ldots,k-1}, \mathbf{z}_{0} \right) + I \left( \mathbf{e}_{k}; \mathbf{d}_{0,\ldots,k-1}, \mathbf{z}_{0} \right). \nonumber
\end{flalign}
Then, according to the fact that $\mathbf{z}_{k}$ is a function of $\mathbf{d}_{0,\ldots,k-1}$ and $\mathbf{z}_{0}$, we have
\begin{flalign} 
h \left( \mathbf{z}_{k} + \mathbf{d}_{k} |  \mathbf{d}_{0,\ldots,k-1}, \mathbf{z}_{0} \right) = h \left( \mathbf{d}_{k} |  \mathbf{d}_{0,\ldots,k-1}, \mathbf{z}_{0} \right). \nonumber
\end{flalign}
On the other hand, since $\mathbf{z}_{0}$ and $ \left\{ \mathbf{d}_{k} \right\}$ are independent (and thus $\mathbf{z}_{0}$ and $\mathbf{d}_{k}$ are independent given $\mathbf{d}_{0,\ldots,k-1}$), we have
\begin{flalign}  
& h \left( \mathbf{d}_{k} |  \mathbf{d}_{0,\ldots,k-1}, \mathbf{z}_{0} \right) \nonumber \\
&~~~~ = h \left( \mathbf{d}_{k} |  \mathbf{d}_{0,\ldots,k-1} \right)  - I \left( \mathbf{d}_{k}; \mathbf{z}_{0} |  \mathbf{d}_{0,\ldots,k-1} \right) \nonumber \\
&~~~~ = h \left( \mathbf{d}_{k} |  \mathbf{d}_{0,\ldots,k-1} \right). \nonumber
\end{flalign}
As a result,
\begin{flalign} 
h \left( \mathbf{e}_{k} \right)
= h \left( \mathbf{d}_{k} |  \mathbf{d}_{0,\ldots,k-1} \right) + I \left( \mathbf{e}_{k}; \mathbf{d}_{0,\ldots,k-1}, \mathbf{z}_{0} \right). \nonumber
\end{flalign}
Hence,
\begin{flalign} 
2^{ h \left( \mathbf{e}_{k} \right)} 
\geq 2^{h \left( \mathbf{d}_k | \mathbf{d}_{0,\ldots,k-1} \right)}, \nonumber
\end{flalign}
where equality holds if and only if $I \left( \mathbf{e}_{k}; \mathbf{d}_{0,\ldots,k-1}, \mathbf{z}_{0} \right) = 0$.
Therefore,
\begin{flalign}
\left[ \mathbb{E} \left( \left| \mathbf{e}_{k} \right|^{p} \right) \right]^{\frac{1}{p}}
\geq \frac{2^{h \left( \mathbf{d}_k | \mathbf{d}_{0,\ldots,k-1} \right)}}{2 \Gamma \left( \frac{p+1}{p} \right) \left( p \mathrm{e} \right)^{\frac{1}{p}}}, \nonumber
\end{flalign}
where equality holds if and only if $\mathbf{e}_{k}$ is with probability density \eqref{distribution} and $I \left( \mathbf{e}_{k}; \mathbf{d}_{0,\ldots,k-1}, \mathbf{z}_{0} \right) = 0$.
\end{proof}


Note that herein $\mu$ is a normalizing factor. As a matter of fact, when equality is achieved in \eqref{feedbackbound}, it can be verified that
\begin{flalign}
\mu 
= \frac{2^{h \left( \mathbf{d}_k | \mathbf{d}_{0,\ldots,k-1} \right)}}{2 \Gamma \left( \frac{p+1}{p} \right) \left( p \mathrm{e} \right)^{\frac{1}{p}}}.
\end{flalign}
Note also that for the rest of the paper, $\mu$ will always be a normalizing factor as of here, and its value can always be determined in a similar manner as well. Hence, we may skip discussions concerning how to determine $\mu$ for simplicity in the subsequent results.

It is worth mentioning that the only assumptions required for Theorem~\ref{recursivetheorem} to hold is that $g \left( \cdot \right)$ is strictly causal and that $\left\{ \mathbf{d}_{k} \right\}$ and $\mathbf{z}_0$ are independent. Those are in fact very general assumptions, allowing both the controller and the plant to be any causal functions/mappings (as long as the open-loop system is strictly causal), while allowing the disturbance to be with arbitrary distributions. In addition, no conditions on the stationarities of the disturbance or the error signal are required either.

In general, it is seen that the lower bound (for any $p \geq 1$) depends only on the conditional entropy of the current disturbance $\mathbf{d}_{k}$ given the previous disturbances $\mathbf{d}_{0,\ldots,k-1}$, i.e., the amount of ``randomness" contained in $\mathbf{d}_{k}$ given $\mathbf{d}_{0,\ldots,k-1}$. As such, if $\mathbf{d}_{0,\ldots,k-1}$ provide more/less information of $\mathbf{d}_{k}$, then the conditional entropy becomes smaller/larger, and thus the bound becomes smaller/larger. 
In particular, if $\left\{ \mathbf{d}_{k} \right\}$ is a Markov process, then $h \left( \mathbf{d}_k | \mathbf{d}_{0,\ldots,k-1} \right) = h \left( \mathbf{d}_k | \mathbf{d}_{k-1} \right)$ \cite{Cov:06}, and hence \eqref{feedbackbound} reduces to
\begin{flalign} 
\left[ \mathbb{E} \left( \left| \mathbf{e}_{k} \right|^{p} \right) \right]^{\frac{1}{p}}
\geq \frac{2^{h \left( \mathbf{d}_k | \mathbf{d}_{k-1} \right)}}{2 \Gamma \left( \frac{p+1}{p} \right) \left( p \mathrm{e} \right)^{\frac{1}{p}}}.
\end{flalign}
In the worst case, when $\mathbf{d}_{k}$ is independent of $\mathbf{d}_{0,\ldots,k-1}$, we have $h \left( \mathbf{d}_k | \mathbf{d}_{0,\ldots,k-1} \right) = h \left( \mathbf{d}_k \right)$, and thus
\begin{flalign} 
\left[ \mathbb{E} \left( \left| \mathbf{e}_{k} \right|^{p} \right) \right]^{\frac{1}{p}}
\geq \frac{2^{h \left( \mathbf{d}_k \right)}}{2 \Gamma \left( \frac{p+1}{p} \right) \left( p \mathrm{e} \right)^{\frac{1}{p}}}.
\end{flalign}

In addition, equality in \eqref{feedbackbound} holds if and only if the innovation \cite{linearestimation} $\mathbf{e}_{k}$ is with probability \eqref{distribution}, and contains no information of the previous disturbances $\mathbf{d}_{0,\ldots,k-1}$ or initial state $\mathbf{z}_{0}$; it is as if all the ``information" that may be utilized to reduce the $\mathcal{L}_{p}$ norm has been extracted. This is more clearly seen from the viewpoint of entropic innovations, as will be shown shortly. Concerning this, we first present an innovations' perspective to view the term $I \left( \mathbf{e}_{k}; \mathbf{d}_{0,\ldots,k-1}, \mathbf{z}_{0} \right)$. 

\begin{proposition} It holds that
	\begin{flalign} \label{innovation}
	I \left( \mathbf{e}_{k}; \mathbf{d}_{0,\ldots,k-1},  \mathbf{z}_{0} \right) = I \left( \mathbf{e}_{k} ; \mathbf{e}_{0,\ldots,k-1},  \mathbf{z}_{0} \right).
	\end{flalign}
\end{proposition}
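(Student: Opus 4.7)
The plan is to show that the conditioning pairs $(\mathbf{d}_{0,\ldots,k-1},\mathbf{z}_{0})$ and $(\mathbf{e}_{0,\ldots,k-1},\mathbf{z}_{0})$ are related by a measurable bijection, so that they carry the same informational content about $\mathbf{e}_{k}$. Mutual information is invariant under bijective measurable transformations of its arguments, which immediately yields \eqref{innovation}.

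First I would establish the bijection. In the forward direction, given $(\mathbf{d}_{0,\ldots,k-1},\mathbf{z}_{0})$, the same recursive unfolding that was already carried out in the proof of Theorem~\ref{recursivetheorem} shows that $\mathbf{z}_{0,\ldots,k-1}$ is a function of $(\mathbf{d}_{0,\ldots,k-1},\mathbf{z}_{0})$, hence so is $\mathbf{e}_{i}=\mathbf{d}_{i}+\mathbf{z}_{i}$ for every $i\leq k-1$. In the reverse direction, given $(\mathbf{e}_{0,\ldots,k-1},\mathbf{z}_{0})$, the strict causality \eqref{generic} determines $\mathbf{z}_{i+1}=g_{i}(\mathbf{e}_{0,\ldots,i})$ for every $i\leq k-2$, so $\mathbf{z}_{0,\ldots,k-1}$ is recovered, and then $\mathbf{d}_{i}=\mathbf{e}_{i}-\mathbf{z}_{i}$ yields $\mathbf{d}_{0,\ldots,k-1}$. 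Thus the two vectors generate identical $\sigma$-algebras.

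Next I would translate this into information-theoretic terms. Writing
\begin{flalign}
I\left(\mathbf{e}_{k};\mathbf{d}_{0,\ldots,k-1},\mathbf{z}_{0}\right)
&= h\left(\mathbf{e}_{k}\right)-h\left(\mathbf{e}_{k}\,\middle|\,\mathbf{d}_{0,\ldots,k-1},\mathbf{z}_{0}\right), \nonumber \\
I\left(\mathbf{e}_{k};\mathbf{e}_{0,\ldots,k-1},\mathbf{z}_{0}\right)
&= h\left(\mathbf{e}_{k}\right)-h\left(\mathbf{e}_{k}\,\middle|\,\mathbf{e}_{0,\ldots,k-1},\mathbf{z}_{0}\right), \nonumber
\end{flalign}
the claim reduces to the equality of the two conditional entropies, which is immediate from the bijection since conditioning on either pair fixes the same event. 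Equivalently one may argue through the data-processing identity for mutual information under invertible maps.

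The only subtle point is ensuring the bijection truly holds under the stated causality conventions; in particular, it is the strict causality of $g$ that allows us to invert $\mathbf{z}_{i+1}=g_{i}(\mathbf{e}_{0,\ldots,i})$ recursively starting from the known $\mathbf{z}_{0}$, without requiring any additional knowledge of $\mathbf{d}_{k}$ or beyond. Once this bookkeeping is in place, the proof is essentially a one-line consequence of the invariance of mutual information under one-to-one transformations, so I do not expect a substantive obstacle beyond carefully laying out the recursive equivalence.
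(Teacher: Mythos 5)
Your proposal is correct and rests on the same core observation as the paper's proof, namely that $\mathbf{z}_{i}$ is determined by $(\mathbf{d}_{0,\ldots,i-1},\mathbf{z}_{0})$ so that $\mathbf{d}_{i}$ and $\mathbf{e}_{i}=\mathbf{d}_{i}+\mathbf{z}_{i}$ can be exchanged in the conditioning set; the paper merely performs this exchange one coordinate at a time inside the mutual-information and conditional-entropy expressions, whereas you package it as a single measurable bijection and invoke invariance of mutual information under one-to-one transformations. Either presentation is fine, and your explicit verification of the reverse direction (recovering $\mathbf{z}_{0,\ldots,k-1}$ and hence $\mathbf{d}_{0,\ldots,k-1}$ from $(\mathbf{e}_{0,\ldots,k-1},\mathbf{z}_{0})$ via strict causality) is a welcome piece of bookkeeping that the paper leaves implicit.
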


\vspace{3mm}

\begin{proof} 
Due to the fact that $\mathbf{z}_{k-1}$ is a function of $\mathbf{d}_{0,\ldots,k-2}$ and $\mathbf{z}_{0}$ (see the proof of Theorem~\ref{recursivetheorem}), we have
\begin{flalign} 
I \left( \mathbf{e}_{k} ; \mathbf{d}_{0,\ldots,k-1}, \mathbf{z}_{0} \right)
& = I \left( \mathbf{e}_{k} ; \mathbf{d}_{0,\ldots,k-2}, \mathbf{e}_{k-1} - \mathbf{z}_{k-1}, \mathbf{z}_{0} \right) \nonumber\\
& = I \left( \mathbf{e}_{k} ; \mathbf{d}_{0,\ldots,k-2}, \mathbf{e}_{k-1}, \mathbf{z}_{0} \right)
.\nonumber
\end{flalign}
Similarly, since  $\mathbf{z}_{i}$ is a function of $\mathbf{d}_{0,\ldots,i-1}$ and $\mathbf{z}_{0}$ for $i = 1, 
\ldots, k- 2$, it follows that 
\begin{flalign} 
& I \left( \mathbf{e}_{k} ; \mathbf{d}_{0,\ldots,k-2}, \mathbf{e}_{k-1}, \mathbf{z}_{0} \right) \nonumber \\
&~~~~ = I \left( \mathbf{d}_{k} ;  \mathbf{d}_{0,\ldots,k-3}, \mathbf{e}_{k-2} - \mathbf{d}_{k-2}, \mathbf{e}_{k-1}, \mathbf{z}_{0} \right) \nonumber \\
&~~~~ = I \left( \mathbf{e}_{k} ;  \mathbf{d}_{0,\ldots,k-3}, \mathbf{e}_{k-2}, \mathbf{e}_{k-1}, \mathbf{z}_{0} \right) \nonumber \\
&~~~~
= \cdots \nonumber \\
&~~~~
= I \left( \mathbf{e}_{k} ;  \mathbf{d}_{0}, \mathbf{e}_{1,\ldots,k-1},  \mathbf{z}_{0} \right)\nonumber \\
&~~~~
= I \left( \mathbf{e}_{k} ;  \mathbf{e}_{0} - \mathbf{z}_{0}, \mathbf{e}_{1,\ldots,k-1}, \mathbf{z}_{0} \right) \nonumber \\
&~~~~
= I \left( \mathbf{e}_{k} ;  \mathbf{e}_{0}, \mathbf{e}_{1,\ldots,k-1}, \mathbf{z}_{0} \right) \nonumber \\
&~~~~
= I \left( \mathbf{e}_{k} ;  \mathbf{e}_{0,\ldots,k-1}, \mathbf{z}_{0} \right), \nonumber
\end{flalign}
which completes the proof. \end{proof}

In addition, if $\mathbf{z}_{0}$ is deterministic (e.g., $\mathbf{z}_{0}=0$), then 
\begin{flalign}
I \left( \mathbf{e}_{k}; \mathbf{d}_{0,\ldots,k-1}, \mathbf{z}_{0} \right) = I \left( \mathbf{e}_{k} ; \mathbf{d}_{0,\ldots,k-1} \right),
\end{flalign}
 and \eqref{innovation} becomes
\begin{flalign}
I \left( \mathbf{e}_{k}; \mathbf{d}_{0,\ldots,k-1} \right) = I \left( \mathbf{e}_{k} ; \mathbf{e}_{0,\ldots,k-1} \right).
\end{flalign}
(Note that more generally,
$
I \left( \mathbf{e}_{k}; \mathbf{e}_{0,\ldots,k-1}, \mathbf{z}_{0} \right) = I \left( \mathbf{e}_{k} ; \mathbf{e}_{0,\ldots,k-1} \right)$
if and only if $h \left(  \mathbf{z}_{0} | \mathbf{e}_{0,\ldots,k} \right) = h \left(  \mathbf{z}_{0} | \mathbf{e}_{0,\ldots,k-1} \right)$.)
In this case, the mutual information between the current innovation and the previous disturbances is equal to that between the current innovation and the previous innovations.
Accordingly, the condition that 
$
I \left( \mathbf{e}_{k}; \mathbf{d}_{0,\ldots,k-1} \right) = 0
$
is equivalent to that 
\begin{flalign}
I \left( \mathbf{e}_{k} ; \mathbf{e}_{0,\ldots,k-1} \right) = 0,
\end{flalign}
which in turn means that the current innovation $\mathbf{e}_{k} $ contains no information of the previous innovations. This is a key link that facilitates the subsequent analysis in the asymptotic case.

\begin{corollary} \label{MIMOasymp}
	Consider the system given in Fig.~\ref{feedback3}. Suppose that the initial state $\mathbf{z}_{0}$ is deterministic.
	Then,  
	\begin{flalign} \label{MIMOasymp1}
	\liminf_{k\to \infty} \left[ \mathbb{E} \left( \left| \mathbf{e}_{k} \right|^{p} \right) \right]^{\frac{1}{p}}
	\geq \liminf_{k\to \infty} \frac{2^{h \left( \mathbf{d}_k | \mathbf{d}_{0,\ldots,k-1} \right)}}{2 \Gamma \left( \frac{p+1}{p} \right) \left( p \mathrm{e} \right)^{\frac{1}{p}}},
	\end{flalign}
	where equality holds if $\left\{ \mathbf{e}_{k} \right\}$ is asymptotically white and with probability density 
	\begin{flalign} \label{asydistribution}
	\lim_{k \to \infty} f_{\mathbf{e}_{k}} \left( x \right)
	= \frac{ \mathrm{e}^{- \left| x \right|^{p} / \left( p \mu^{p} \right)} }{2 \Gamma \left( \frac{p+1}{p} \right) p^{\frac{1}{p}} \mu}.
	\end{flalign}
\end{corollary}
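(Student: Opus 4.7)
The plan is to leverage Theorem~\ref{recursivetheorem} pointwise and push the bound through $\liminf$. First, I would invoke \eqref{feedbackbound} at each finite $k\geq 0$; since the strict causality of $g$ and the independence of $\{\mathbf{d}_k\}$ and $\mathbf{z}_0$ required by Theorem~\ref{recursivetheorem} are inherited by the corollary (the extra hypothesis here, determinism of $\mathbf{z}_0$, only strengthens that independence), the bound holds unconditionally for every $k$. Taking $\liminf_{k\to\infty}$ on both sides and using monotonicity of $\liminf$ yields \eqref{MIMOasymp1} with no further work.

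For the equality claim, the plan is to inspect when the chain of inequalities in the proof of Theorem~\ref{recursivetheorem} becomes asymptotically tight. That chain carries two distinct slacks: the maximum-entropy bound of Lemma~\ref{maximum} applied to $\mathbf{e}_k$, and the non-negativity of $I(\mathbf{e}_k;\mathbf{d}_{0,\ldots,k-1},\mathbf{z}_0)$. Specializing to deterministic $\mathbf{z}_0$, the latter collapses to $I(\mathbf{e}_k;\mathbf{d}_{0,\ldots,k-1})$, which by the innovations identity \eqref{innovation} equals $I(\mathbf{e}_k;\mathbf{e}_{0,\ldots,k-1})$. I would then argue that asymptotic whiteness of $\{\mathbf{e}_k\}$ drives this innovations' mutual information to zero, while the assumed limiting density \eqref{asydistribution} closes the maximum-entropy gap via Lemma~\ref{maximum}. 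Together, these two conditions force both slacks in the decomposition $h(\mathbf{e}_k)=h(\mathbf{d}_k\mid\mathbf{d}_{0,\ldots,k-1})+I(\mathbf{e}_k;\mathbf{e}_{0,\ldots,k-1})$ to vanish in the limit, so the $\liminf$ of the two sides of \eqref{MIMOasymp1} coincide.

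The main obstacle I foresee is not algebraic but definitional: the word ``asymptotically white'' must be read in the information-theoretic sense of asymptotically mutually independent innovations, which is strictly stronger than asymptotically uncorrelated (these agree under Gaussianity but not in general, and here the limiting law \eqref{asydistribution} is in general non-Gaussian). I would therefore make this stronger interpretation explicit at the point where I conclude $I(\mathbf{e}_k;\mathbf{e}_{0,\ldots,k-1})\to 0$, so that the maximum-entropy characterization of Lemma~\ref{maximum} passes cleanly to the $\liminf$. Apart from this reading, no further estimates are needed; the corollary is essentially the finite-$k$ bound of Theorem~\ref{recursivetheorem} viewed through an asymptotic lens.
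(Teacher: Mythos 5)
Your proposal is correct and follows essentially the same route as the paper: apply Theorem~\ref{recursivetheorem} at each finite $k$, take $\liminf$ on both sides, and then use the innovations identity \eqref{innovation} (with $\mathbf{z}_0$ deterministic) to translate the vanishing of $I\left( \mathbf{e}_{k}; \mathbf{d}_{0,\ldots,k-1} \right)$ into asymptotic whiteness of $\left\{ \mathbf{e}_{k} \right\}$, while the limiting density \eqref{asydistribution} closes the maximum-entropy gap of Lemma~\ref{maximum}. Your definitional caveat that ``white'' must be read as ``independent over time'' is precisely the clarification the paper itself makes in the remark immediately following the corollary.
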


\vspace{3mm}

\begin{proof}
It is known from Theorem~\ref{recursivetheorem} that
\begin{flalign} 
\left[ \mathbb{E} \left( \left| \mathbf{e}_{k} \right|^{p} \right) \right]^{\frac{1}{p}}
\geq \frac{2^{h \left( \mathbf{d}_k | \mathbf{d}_{0,\ldots,k-1} \right)}}{2 \Gamma \left( \frac{p+1}{p} \right) \left( p \mathrm{e} \right)^{\frac{1}{p}}}, \nonumber
\end{flalign} 
where equality holds if and only if $\mathbf{e}_{k}$ is with probability density \eqref{distribution} and $I \left( \mathbf{e}_{k}; \mathbf{d}_{0,\ldots,k-1} \right) = 0$. This, by taking $\liminf_{k\to \infty}$ on its both sides, then leads to
\begin{flalign} 
\liminf_{k\to \infty} \left[ \mathbb{E} \left( \left| \mathbf{e}_{k} \right|^{p} \right) \right]^{\frac{1}{p}}
\geq \liminf_{k\to \infty} \frac{2^{h \left( \mathbf{d}_k | \mathbf{d}_{0,\ldots,k-1} \right)}}{2 \Gamma \left( \frac{p+1}{p} \right) \left( p \mathrm{e} \right)^{\frac{1}{p}}}. \nonumber
\end{flalign}
Herein, equality holds if $\mathbf{e}_{k}$ is with probability density \eqref{distribution} and 
\begin{flalign}
I \left( \mathbf{e}_{k}; \mathbf{d}_{0,\ldots,k-1} \right)  = I \left( \mathbf{e}_{k} ; \mathbf{e}_{0,\ldots,k-1} \right)= 0, \nonumber
\end{flalign}
as $k\to \infty$. Since the fact that
$
I \left( \mathbf{e}_{k} ; \mathbf{e}_{0,\ldots,k-1} \right)= 0
$
as $k\to \infty$ is equivalent to the fact that $\mathbf{e}_{k}$ is asymptotically white, equality in \eqref{MIMOasymp1} holds if $\left\{ \mathbf{e}_{k} \right\}$ is asymptotically white and with probability density \eqref{asydistribution}.
\end{proof}

Strictly speaking, herein ``white" should be ``independent (over time)"; in the rest of the paper, however, we will use ``white"
to replace ``independent" for simplicity, unless otherwise specified. 
On the other hand, when the disturbance is further assumed to be asymptotically stationary, the following corollary holds.

\begin{corollary} \label{uniform}
	Consider the system given in Fig.~\ref{feedback3} with an asymptotically stationary disturbance $\left\{ \mathbf{d}_{k} \right\}$. Suppose that the initial state $\mathbf{z}_{0}$ is deterministic.
	Then,
	\begin{flalign} 
	\liminf_{k\to \infty} \left[ \mathbb{E} \left( \left| \mathbf{e}_{k} \right|^{p} \right) \right]^{\frac{1}{p}}
	\geq \frac{2^{h_{\infty} \left( \mathbf{d} \right)}}{2 \Gamma \left( \frac{p+1}{p} \right) \left( p \mathrm{e} \right)^{\frac{1}{p}}},
	\end{flalign}
	where $h_{\infty} \left( \mathbf{d} \right)$ denotes the entropy rate of $\left\{ \mathbf{d}_{k} \right\}$. Herein, equality holds if $\left\{ \mathbf{e}_{k} \right\}$ is asymptotically white and with probability density \eqref{asydistribution}.
\end{corollary}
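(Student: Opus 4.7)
The plan is to deduce this statement from Corollary~\ref{MIMOasymp} by identifying $\liminf_{k\to \infty} h(\mathbf{d}_k | \mathbf{d}_{0,\ldots,k-1})$ with the entropy rate $h_\infty(\mathbf{d})$ under the additional hypothesis of asymptotic stationarity of $\{\mathbf{d}_k\}$. Since the assumptions of Corollary~\ref{MIMOasymp} (in particular, $\mathbf{z}_0$ deterministic and $g(\cdot)$ strictly causal) are already in force, I would first invoke that corollary directly to obtain
$$\liminf_{k\to \infty}\bigl[\mathbb{E}(|\mathbf{e}_{k}|^{p})\bigr]^{1/p} \geq \liminf_{k\to \infty}\frac{2^{h(\mathbf{d}_k | \mathbf{d}_{0,\ldots,k-1})}}{2 \Gamma\!\left(\frac{p+1}{p}\right)(p\mathrm{e})^{1/p}}.$$

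Next, I would argue that $\lim_{k\to \infty} h(\mathbf{d}_k | \mathbf{d}_{0,\ldots,k-1}) = h_\infty(\mathbf{d})$. For a strictly stationary process this is the standard characterization of the entropy rate: by stationarity and the fact that conditioning reduces entropy, the sequence $\{h(\mathbf{d}_k | \mathbf{d}_{0,\ldots,k-1})\}$ is nonincreasing, and its limit coincides with $\lim_k \frac{1}{k+1} h(\mathbf{d}_{0,\ldots,k}) = h_\infty(\mathbf{d})$ via a standard averaging argument applied to the chain rule. For an asymptotically stationary process, the same limit exists by appealing to the AMS entropy-rate framework of Gray already cited in the preliminaries; consequently the $\liminf$ in the preceding display becomes a true limit.

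With this identification in hand, continuity of $x \mapsto 2^x$ gives $\liminf_{k\to \infty} 2^{h(\mathbf{d}_k | \mathbf{d}_{0,\ldots,k-1})} = 2^{h_\infty(\mathbf{d})}$, which immediately yields the stated lower bound. The equality condition is then inherited verbatim from Corollary~\ref{MIMOasymp}: if $\{\mathbf{e}_k\}$ is asymptotically white with marginal density tending to \eqref{asydistribution}, then equality already holds in that corollary's bound, and hence in the bound asserted here.

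The main obstacle is the identification $\lim_k h(\mathbf{d}_k | \mathbf{d}_{0,\ldots,k-1}) = h_\infty(\mathbf{d})$ in the asymptotically stationary (rather than strictly stationary) continuous setting. The classical monotonicity-plus-averaging argument for discrete processes rests on the chain rule, which is available in the differential setting under mild integrability assumptions on the disturbance, but passing from strictly stationary to asymptotically stationary processes is more delicate, as one must invoke the AMS entropy-rate machinery to ensure that the limit of the conditional differential entropies still exists and coincides with the entropy rate of the limiting stationary distribution; absent such a regularity condition, only a $\liminf$ inequality $\liminf_k h(\mathbf{d}_k | \mathbf{d}_{0,\ldots,k-1}) \geq h_\infty(\mathbf{d})$ would be available, which would nonetheless still suffice to establish the desired bound.
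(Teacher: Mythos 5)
Your proposal matches the paper's proof exactly: it invokes Corollary~\ref{MIMOasymp} and then identifies $\liminf_{k\to\infty} h(\mathbf{d}_k \mid \mathbf{d}_{0,\ldots,k-1})$ with the entropy rate $h_\infty(\mathbf{d})$ for an asymptotically stationary disturbance, inheriting the equality condition verbatim. Your additional remarks on the delicacy of the asymptotically stationary (AMS) case, and the observation that a one-sided $\liminf$ inequality would already suffice for the bound, are more careful than the paper's one-line citation but do not change the argument.
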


\begin{proof}
	Corollary~\ref{uniform} follows directly from Corollary~\ref{MIMOasymp} by noting that 
	\begin{flalign} 
	\liminf_{k\to \infty} h \left( \mathbf{d}_k | \mathbf{d}_{0,\ldots,k-1} \right) = \lim_{k\to \infty} h \left( \mathbf{d}_k | \mathbf{d}_{0,\ldots,k-1} \right)  = h_{\infty} \left( \mathbf{d} \right) \nonumber
	\end{flalign}
	holds for an asymptotically stationary $\left\{ \mathbf{d}_{k} \right\}$ \cite{Cov:06}.
\end{proof}

As a matter of fact, if $\left\{ \mathbf{e}_{k} \right\}$ is asymptotically white and with probability density \eqref{asydistribution}, then, noting also that $\left\{ \mathbf{d}_{k} \right\}$ is asymptotically stationary, it holds that
\begin{flalign} \label{equality}
\lim_{k\to \infty} \left[ \mathbb{E} \left( \left| \mathbf{e}_{k} \right|^{p} \right) \right]^{\frac{1}{p}}
= \frac{2^{h_{\infty} \left( \mathbf{d} \right)}}{2 \Gamma \left( \frac{p+1}{p} \right) \left( p \mathrm{e} \right)^{\frac{1}{p}}}.
\end{flalign}
In addition, we can show that \eqref{equality} holds if and only if $\left\{ \mathbf{e}_{k} \right\}$ is asymptotically white and with probability density \eqref{asydistribution}; in other words, the necessary and sufficient condition for achieving the prediction bounds asymptotically is that the innovation is asymptotically white and with probability density \eqref{asydistribution}. 

\subsection{Special Cases} \label{special}

We now consider the special cases of Theorem~\ref{recursivetheorem} for when $p=2$ and $p=\infty$, respectively.

%

\subsubsection{When $p=2$} \label{variancesection}
The next corollary follows.

\begin{corollary}
	Consider the system given in Fig.~\ref{feedback3}.
	Then,  
	\begin{flalign} \label{variance}
	\left[ \mathbb{E} \left( \mathbf{e}_{k}^{2} \right)  \right]^{\frac{1}{2}}
	\geq \frac{2^{h \left( \mathbf{d}_k | \mathbf{d}_{0,\ldots,k-1} \right)}}{\left( 2 \pi \mathrm{e} \right)^{\frac{1}{2}}},
	\end{flalign}
	where equality holds if and only if $\mathbf{e}_{k}$ is with probability density 
	\begin{flalign}
	f_{\mathbf{e}_{k}} \left( x \right)
	= \frac{ \mathrm{e}^{- x^{2} / \left( 2 \mu^{2} \right)} }{ \left(2 \pi \mu^2 \right)^{\frac{1}{2}} },
	\end{flalign}
	and 
	\begin{flalign}
	\mu = \frac{2^{h \left( \mathbf{d}_k | \mathbf{d}_{0,\ldots,k-1} \right)}}{\left( 2 \pi \mathrm{e} \right)^{\frac{1}{2}}}, 
	\end{flalign}
	that is to say, if and only if $\mathbf{e}_{k}$ is Gaussian,
	and $I \left( \mathbf{e}_{k}; \mathbf{d}_{0,\ldots,k-1}, \mathbf{z}_{0} \right) = 0$.
\end{corollary}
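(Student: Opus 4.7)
The plan is to derive this corollary as a direct specialization of Theorem~\ref{recursivetheorem} to the case $p=2$, so no new conceptual machinery is required; the only work is to simplify the generic constants and to verify that the extremal density collapses to a Gaussian. First I would substitute $p=2$ into the right-hand side of \eqref{feedbackbound} and evaluate the normalizing constant $2\Gamma\!\left(\tfrac{p+1}{p}\right)(p\mathrm{e})^{1/p}$. Using $\Gamma\!\left(\tfrac{3}{2}\right)=\tfrac{\sqrt{\pi}}{2}$, this becomes
\begin{flalign}
2\cdot\frac{\sqrt{\pi}}{2}\cdot (2\mathrm{e})^{1/2} = \sqrt{2\pi\mathrm{e}}, \nonumber
\end{flalign}
which reproduces the denominator $(2\pi\mathrm{e})^{1/2}$ appearing in \eqref{variance}.

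Next I would specialize the extremal density \eqref{distribution} to $p=2$. The exponent $-|x|^{p}/(p\mu^{p})$ becomes $-x^{2}/(2\mu^{2})$, and the denominator $2\Gamma\!\left(\tfrac{p+1}{p}\right)p^{1/p}\mu$ becomes
\begin{flalign}
2\cdot\frac{\sqrt{\pi}}{2}\cdot 2^{1/2}\cdot \mu = \sqrt{2\pi\mu^{2}}, \nonumber
\end{flalign}
so that the density takes the form $\mathrm{e}^{-x^{2}/(2\mu^{2})}/\sqrt{2\pi\mu^{2}}$. This is exactly the zero-mean Gaussian density with variance $\mu^{2}$, which justifies the claim that equality in \eqref{variance} holds if and only if $\mathbf{e}_{k}$ is Gaussian.

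Finally, I would invoke Theorem~\ref{recursivetheorem} directly to inherit the equality condition $I(\mathbf{e}_{k};\mathbf{d}_{0,\ldots,k-1},\mathbf{z}_{0})=0$ without any further argument, so that the necessary-and-sufficient condition in the corollary is simply the $p=2$ instance of the one in the theorem. There is no real obstacle here; the only mildly delicate step is checking the two Gamma-function simplifications do not hide an extra constant factor. Since both reductions land on standard Gaussian expressions, the calculation is self-checking, and the corollary follows with essentially one line of algebraic bookkeeping on top of Theorem~\ref{recursivetheorem}.
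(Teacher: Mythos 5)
Your proposal is correct and matches the paper's (implicit) derivation exactly: the corollary is presented as a direct specialization of Theorem~\ref{recursivetheorem} to $p=2$, and your Gamma-function simplifications $2\Gamma\left(\tfrac{3}{2}\right)(2\mathrm{e})^{1/2}=\sqrt{2\pi\mathrm{e}}$ and $2\Gamma\left(\tfrac{3}{2}\right)2^{1/2}\mu=\sqrt{2\pi\mu^{2}}$ are both accurate. Nothing further is needed.
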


It is clear that \eqref{variance} can simply be rewritten as
\begin{flalign} 
\mathbb{E} \left( \mathbf{e}_{k}^{2} \right)
\geq \frac{2^{2 h \left( \mathbf{d}_k | \mathbf{d}_{0,\ldots,k-1} \right)}}{ 2 \pi \mathrm{e}},
\end{flalign}
which provides a fundamental lower bound for minimum-variance control \cite{aastrom2012introduction}.

\subsubsection{When $p=\infty$} The next corollary follows.

\begin{corollary} 
	Consider the system given in Fig.~\ref{feedback3}.
	Then,  
	\begin{flalign} \label{MD1}
	\esssup_{ f_{\mathbf{e}_{k}} \left( x \right) > 0} \left| \mathbf{e}_{k} \right|
	\geq \frac{2^{h \left( \mathbf{d}_k | \mathbf{d}_{0,\ldots,k-1} \right)}}{2},
	\end{flalign}
	where equality holds if and only if $\mathbf{e}_{k}$ is with probability density 
	\begin{flalign}
	f_{\mathbf{e}_{k}} \left( x \right)
	= \left\{ \begin{array}{cc}
	\frac{1}{2 \mu}, & \left| x \right| \leq \mu,\\
	0, & \left| x \right| > \mu,
	\end{array} \right. 
	\end{flalign}
	and 
	\begin{flalign}
	\mu = \frac{2^{h \left( \mathbf{d}_k | \mathbf{d}_{0,\ldots,k-1} \right)}}{2}, 
	\end{flalign}
	that is to say, if and only if $\mathbf{e}_{k}$ is uniform, and $I \left( \mathbf{e}_{k}; \mathbf{d}_{0,\ldots,k-1}, \mathbf{z}_{0} \right) = 0$.
\end{corollary}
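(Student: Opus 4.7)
The plan is to repeat the structure of the proof of Theorem~\ref{recursivetheorem} verbatim, with the one change that the $\mathcal{L}_{p}$ maximum-entropy inequality from Lemma~\ref{maximum} is replaced by its $p\to\infty$ counterpart already derived in the preliminaries, namely
\begin{flalign}
\esssup_{f_{\mathbf{x}}(x) > 0} |\mathbf{x}| \geq \frac{2^{h(\mathbf{x})}}{2}, \nonumber
\end{flalign}
with equality if and only if $\mathbf{x}$ is uniform on $[-\mu,\mu]$ with $\mu = 2^{h(\mathbf{x})}/2$. Applied to the innovation $\mathbf{e}_{k}$, this provides the ``outer'' half of the argument in exactly the same way the Gaussian/generalized-Gaussian bound did for finite $p$.

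For the ``inner'' half, I would simply quote the chain of identities already established inside the proof of Theorem~\ref{recursivetheorem}: the recursion shows that $\mathbf{z}_{k}$ is a deterministic function of $(\mathbf{d}_{0,\ldots,k-1},\mathbf{z}_{0})$, which yields
\begin{flalign}
h(\mathbf{e}_{k})
&= h(\mathbf{z}_{k}+\mathbf{d}_{k}\mid \mathbf{d}_{0,\ldots,k-1},\mathbf{z}_{0}) + I(\mathbf{e}_{k};\mathbf{d}_{0,\ldots,k-1},\mathbf{z}_{0}) \nonumber \\
&= h(\mathbf{d}_{k}\mid \mathbf{d}_{0,\ldots,k-1}) + I(\mathbf{e}_{k};\mathbf{d}_{0,\ldots,k-1},\mathbf{z}_{0}), \nonumber
\end{flalign}
where the disappearance of $\mathbf{z}_{0}$ in the conditioning uses the independence of $\mathbf{z}_{0}$ and $\{\mathbf{d}_{k}\}$. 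Since mutual information is nonnegative, $2^{h(\mathbf{e}_{k})}\geq 2^{h(\mathbf{d}_{k}\mid\mathbf{d}_{0,\ldots,k-1})}$, with equality if and only if $I(\mathbf{e}_{k};\mathbf{d}_{0,\ldots,k-1},\mathbf{z}_{0})=0$.

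Combining these two steps gives \eqref{MD1}. The equality characterization then follows by intersecting the two equality conditions: the essential-supremum bound forces $\mathbf{e}_{k}$ to be uniform on some symmetric interval, and the entropy identity forces $I(\mathbf{e}_{k};\mathbf{d}_{0,\ldots,k-1},\mathbf{z}_{0})=0$, both of which must hold simultaneously. The normalizing constant $\mu$ is fixed by the same consistency calculation as in the remark following Theorem~\ref{recursivetheorem}, giving $\mu = 2^{h(\mathbf{d}_{k}\mid\mathbf{d}_{0,\ldots,k-1})}/2$.

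There is no real obstacle: this corollary is essentially the $p\to\infty$ specialization of Theorem~\ref{recursivetheorem}, and every ingredient—the strict-causality recursion, the independence argument, and the $\mathcal{L}_\infty$ maximum-entropy bound—has already been established earlier in the paper. The only thing one must be mildly careful about is not invoking $\Gamma((p+1)/p)(pe)^{1/p}\to 1$ by a continuity argument on the finite-$p$ bound (which would require extra justification in the essential-supremum limit); instead, it is cleaner to apply the $\mathcal{L}_\infty$ maximum-entropy inequality directly, as outlined above.
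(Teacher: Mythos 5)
Your proposal is correct and matches the paper's intended argument: the paper gives no separate proof for this corollary precisely because it is the $p\to\infty$ specialization of Theorem~\ref{recursivetheorem}, obtained by combining the entropic identity $h(\mathbf{e}_k)=h(\mathbf{d}_k\mid\mathbf{d}_{0,\ldots,k-1})+I(\mathbf{e}_k;\mathbf{d}_{0,\ldots,k-1},\mathbf{z}_0)$ with the essential-supremum maximum-entropy bound already stated in the preliminaries. Your remark about applying the $\mathcal{L}_\infty$ inequality directly rather than passing to the limit in the finite-$p$ bound is a sensible precaution, and the rest of the argument is identical to the paper's.
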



It is worth pointing out that in the case where the variance of the error is minimized (see Section~\ref{variancesection}), it is possible that the probability of having an arbitrary large error (deviation) is non-zero, such is the case when the error is Gaussian \cite{linearestimation}. This could cause severe consequences in safety-critical systems interacting with real world, especially in scenarios where worst-case performance
guarantees must be strictly imposed. Instead, we may directly consider the worst-case scenario by minimizing the maximum (supremum) deviation rather than the variance of the error in the first place. In this case, \eqref{MD1} provides a generic lower bound for the least maximum deviation control; on the other hand, it is also implicated that the error should be made as close to being with a uniform distribution as possible in order to make the maximum deviation as small as possible.

\subsection{Generality of the Bounds} \label{loop}


Note that for the fundamental performance limitations derived in this paper, the classes of control algorithms that can be applied are not restricted as long as they are causal. This means that the performance bounds are valid for all possible control design methods in practical use, including conventional methods as well as machine learning approaches such as reinforcement learning and deep learning (see, e.g., \cite{
	duriez2017machine,
	kiumarsi2017optimal, bertsekas2019reinforcement,recht2019tour,
	zoppoli2020neural} and the references therein); note that any machine learning algorithms that are in the position of the controller  can be viewed as causal functions/mappings from the controller input to the controller output, no matter what the specific algorithms are or how the parameters are to be tuned. As such, the aforementioned fundamental limitations are still valid with any learning elements in the feedback loop; in other words, fundamental limits in general exist to what learning algorithms can achieve in the position of the controller, featuring fundamental limits of learning-based control. (It is true, for instance, that multilayer feedforward neural networks are universal approximators \cite{hornik1989multilayer, goodfellow2016deep}, but it is also true that the performance bounds hold for any functions the neural networks might approximate.)
On the other hand, the classes of plants are not restricted either, as long as they are causal. As such, the fundamental limitations are in fact prevalent in all possible feedback systems.
It is also worth mentioning that no specific restrictions have been imposed on the distributions of the disturbance in general.

\section{Conclusion}

In this paper, we have obtained fundamental performance limitations for generic feedback systems in which both the controller and the plant may be any causal functions/mappings  while the disturbance can be with any distributions. 
Possible future research directions include examining the implications of the bounds in the context of generic state estimation systems, as well as investigating more restricted classes of system dynamics, whereas the current paper focuses on the formulation of the general case.

\addtolength{\textheight}{-12cm}   

\balance

\bibliographystyle{IEEEtran}
\bibliography{references}

\end{document}